\newtheorem{lemma}{Lemma}
\newtheorem{theorem}{Theorem}
\theoremstyle{definition}
\newtheorem{assumption}{Assumption}
\DeclareMathOperator{\tr}{tr}
\begin{document}

\title{Finite-size scaling analysis of eigenstate thermalization}

\begin{CJK}{UTF8}{gbsn}

\author{Yichen Huang (黄溢辰)\thanks{yichuang@mit.edu}}
\affil{Center for Theoretical Physics, Massachusetts Institute of Technology, Cambridge, Massachusetts 02139, USA}

\maketitle

\end{CJK}

\begin{abstract}

We study the fluctuations of eigenstate expectation values in a microcanonical ensemble. Assuming the eigenstate thermalization hypothesis, an analytical formula for the finite-size scaling of the fluctuations is derived. The same problem was studied by Beugeling et al. [Phys. Rev. E 89, 042112 (2014)]. We compare our results with theirs.

\end{abstract}

\section{Introduction}

As an explanation for the emergence of statistical mechanics from the unitary evolution of isolated quantum many-body systems, the eigenstate thermalization hypothesis (ETH) \cite{Deu91, Sre94, RDO08, DKPR16, DLL18, Deu18} is of high current interest. A large body of literature is available on testing the ETH \cite{KIH14, SV17, HLZ18, GG18, BLGR20, RDSG20, SHU21} and exploring its implications \cite{KLW15, VR17, LCB18, Hua19NPB, HG19, LG19, BCSB19, WGRE19, ARG20, Hua21NPB, Hua21ISIT, HMK22, Hua22TIT, HH19, Hua19aA, CB21} in various systems.

The ETH states that eigenstates that are close in energy have similar local expectation values, i.e., the fluctuations of eigenstate expectation values (EEV) in a microcanonical ensemble vanish in the thermodynamic limit. What is the asymptotic behavior of EEV fluctuations as the system size diverges? The answer to this question depends on how EEV fluctuations are defined. Assuming the ETH, we rigorously derive an analytical formula for the finite-size scaling of some definitions (including the one in Ref. \cite{BMH14}) of EEV fluctuations.

The rest of this paper is organized as follows. Section \ref{s:def} sets the stage and introduces some definitions of EEV fluctuations. Section \ref{s:res} presents the main results. Section \ref{s:dis} compares our results with those of Beugeling et al. \cite{BMH14} for the same problem. The main text of this paper should be easy to read, for most of the technical details are deferred to Appendix \ref{s:app}.

\section{Definitions} \label{s:def}

Throughout this paper, standard asymptotic notations are used extensively. Let $f,g:\mathbb R^+\to\mathbb R^+$ be two functions. One writes $f(x)=O(g(x))$ if and only if there exist constants $M,x_0>0$ such that $f(x)\le Mg(x)$ for all $x>x_0$; $f(x)=\Omega(g(x))$ if and only if there exist constants $M,x_0>0$ such that $f(x)\ge Mg(x)$ for all $x>x_0$; $f(x)=\Theta(g(x))$ if and only if there exist constants $M_1,M_2,x_0>0$ such that $M_1g(x)\le f(x)\le M_2g(x)$ for all $x>x_0$.

Consider a system of $N$ spins on a hypercubic lattice in $D=\Theta(1)$ spatial dimensions, where each lattice site has a spin. The dimension of the Hilbert space is $d=d_\textnormal{loc}^N$, where $d_\textnormal{loc}=\Theta(1)$ is the local dimension of each spin. The system is governed by a (not necessarily translation-invariant) local Hamiltonian
\begin{equation} \label{eq:H}
H=\sum_iH_i.    
\end{equation}
The sum is over $\Theta(N)$ lattice sites. Each term $H_i$ has operator norm $\|H_i\|=\Theta(1)$ and is supported in a small neighborhood of site $i$. Assume without loss of generality that $\tr H_i=0$ (traceless) so that the mean energy of $H$ is $\tr H/d=0$.

Let $\{|j\rangle\}_{j=1}^d$ be a complete set of eigenstates of $H$ with corresponding energies $\{E_j\}$. Let $J:=\{j:-N\delta_1\le E_j\le N\delta_2\}$ be a microcanonical ensemble in the middle of the energy spectrum, where $\delta_1,\delta_2=\Theta(1)$ are arbitrary positive constants. Let $A$ be a traceless local operator with $\|A\|=1$ and $A_{jj}:=\langle j|A|j\rangle$ be the EEV so that $\sum_{j=1}^dA_{jj}=\tr A=0$.

In this paper, we consider three definitions of EEV fluctuations. The first
\begin{equation}
\Delta A:=\frac{1}{d}\sum_{j=1}^d|A_{jj}|^2
\end{equation}
is simply the variance of $A_{jj}$ in all eigenstates. The second
\begin{equation}
\Delta A_J:=\frac{1}{|J|}\sum_{j\in J}\left|A_{jj}-\frac{1}{|J|}\sum_{k\in J}A_{kk}\right|^2=\frac{1}{|J|}\sum_{j\in J}|A_{jj}|^2-\frac{1}{|J|^2}\left|\sum_{j\in J}A_{jj}\right|^2
\end{equation}
is the variance of $A_{jj}$ in $J$. The third definition \cite{BMH14} is slightly more complicated. Let $K_j=\{k:|E_j-E_k|\le N\delta_3\}$, where $\delta_3=\Theta(1)$ is an arbitrary positive constant. Let
\begin{equation}
\Delta'A_J:=\frac{1}{|J|}\sum_{j\in J}\left|A_{jj}-\frac{1}{|K_j|}\sum_{k\in K_j}A_{kk}\right|^2.
\end{equation}

\section{Results} \label{s:res}

In the thermodynamic limit $N\to+\infty$, the fluctuations $\Delta A_J,\Delta'A_J$ depend weakly on the hyperparameters $\delta_1,\delta_2,\delta_3$, and are approximately equal to $\Delta A$ up to exponentially small additive errors.

\begin{lemma} \label{l:1}
For any traceless local operator $A$ with $\|A\|=1$,
\begin{gather}
|\Delta A_J-\Delta A|=e^{-\Omega(N)}, \label{eq:def1} \\
|\Delta'A_J-\Delta A|=e^{-\Omega(N)}. \label{eq:def2} 
\end{gather}
\end{lemma}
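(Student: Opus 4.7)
The plan is to leverage the standard concentration of the density of states for local Hamiltonians: for any fixed constant $t > 0$,
\[
|\{j : |E_j| \ge tN\}| \le d \cdot e^{-\Omega(N)},
\]
where the implicit constant in $\Omega$ may depend on $t$. This is a standard consequence of sub-Gaussian concentration of the energy under the uniform measure on eigenstates, which can be established by bounding the even moments $\tr H^{2k}/d$ using locality and tracelessness of the $H_i$. In particular, $|J|/d = 1 - e^{-\Omega(N)}$, and for any $j$ with $|E_j| \le N\delta_3/2$, the set $K_j$ contains every eigenstate $k$ with $|E_k| \le N\delta_3/2$, so $|K_j|/d = 1 - e^{-\Omega(N)}$ as well.

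For~(\ref{eq:def1}), I will expand
\[
\Delta A_J = \frac{1}{|J|}\sum_{j \in J}|A_{jj}|^2 - \frac{1}{|J|^2}\Bigl|\sum_{j \in J}A_{jj}\Bigr|^2.
\]
The first term equals $(d/|J|)\,\Delta A - |J|^{-1}\sum_{j \notin J}|A_{jj}|^2$, which differs from $\Delta A$ by $e^{-\Omega(N)}$, using $\Delta A \le \|A\|^2 = 1$, $|A_{jj}| \le 1$, and the bound on $|J|/d$. For the second term, tracelessness of $A$ gives $\sum_{j \in J}A_{jj} = -\sum_{j \notin J}A_{jj}$, whose modulus is at most $d - |J| = d \cdot e^{-\Omega(N)}$; after squaring and dividing by $|J|^2$ the contribution is still $e^{-\Omega(N)}$.

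For~(\ref{eq:def2}), I will expand the square and split $J = J_1 \sqcup J_2$, where $J_1 := \{j \in J : |E_j| \le N\delta_3/2\}$. For $j \in J_1$ we have $|K_j|/d = 1 - e^{-\Omega(N)}$, and tracelessness yields $|\sum_{k \in K_j}A_{kk}| \le d - |K_j| = d \cdot e^{-\Omega(N)}$, so the inner average $B_j := |K_j|^{-1}\sum_{k \in K_j}A_{kk}$ satisfies $|B_j| = e^{-\Omega(N)}$ uniformly on $J_1$. Hence both the cross term $\frac{2}{|J|}\sum_{j \in J}\operatorname{Re}(A_{jj}^* B_j)$ and the squared inner-average term $\frac{1}{|J|}\sum_{j \in J}|B_j|^2$ contribute $e^{-\Omega(N)}$ when restricted to $J_1$. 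For $j \in J_2$, concentration at threshold $\delta_3/2$ gives $|J_2|/|J| = e^{-\Omega(N)}$, so the trivial bounds $|A_{jj}| \le 1$ and $|B_j| \le 1$ make the $J_2$ contribution $e^{-\Omega(N)}$. Combining, $\Delta'A_J$ equals $|J|^{-1}\sum_{j \in J}|A_{jj}|^2 = \Delta A + e^{-\Omega(N)}$ up to an additive $e^{-\Omega(N)}$ error.

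The main obstacle is really just bookkeeping: tracking that the $\Omega(N)$-exponents combine with positive constants after the splits and the sums over $O(d)$ terms, and that none of these constants become vanishing as $N \to \infty$. Since $\delta_1, \delta_2, \delta_3$ are all $\Theta(1)$, the concentration constants only degrade to other positive constants, and the total error remains exponentially small in $N$, as required.
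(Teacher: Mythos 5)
Your proof is correct and follows essentially the same route as the paper: the concentration of eigenvalues near zero energy density, tracelessness of $A$ to control $\sum_{j\in J}A_{jj}$ and $\sum_{k\in K_j}A_{kk}$ via their complements, and a split of $J$ into eigenstates with $|E_j|\le N\delta_3/2$ (where $|K_j|/d=1-e^{-\Omega(N)}$) and an exponentially small remainder. The only cosmetic difference is your choice of $J_1=\{j\in J:|E_j|\le N\delta_3/2\}$ versus the paper's $J'=\{j:|E_j|\le N\delta'\}$ with $\delta'=\min\{\delta_1,\delta_2,\delta_3/2\}$, which changes nothing.
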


\begin{proof}
See Appendix \ref{ss:pl1}.
\end{proof}

It suffices to assume the ETH for eigenstates in the middle of the energy spectrum.

\begin{assumption} [eigenstate thermalization hypothesis in the middle of the spectrum \cite{HBZ19, Hua20}] \label{ethasmp}
Let $\epsilon$ be an arbitrarily small positive constant. For any traceless local operator $A$ with $\|A\|=1$, there is a function $f_A:[-\epsilon,\epsilon]\to\{z\in\mathbb C:|z|\le1\}$ such that
\begin{equation} \label{asmpeq}
|A_{jj}-f_A(E_j/N)|\le1/\poly(N)
\end{equation}
for all $j$ with $|E_j|\le N\epsilon$, where $\poly(N)$ denotes a polynomial of sufficiently high degree in $N$. We assume that $f_A(x)$ is smooth in the sense of having a Taylor expansion to some low order around $x=0$.
\end{assumption}

In quantum chaotic systems, it was proposed analytically \cite{Sre99} and supported by numerical simulations \cite{KIH14} that the right-hand side of (\ref{asmpeq}) can be improved to $e^{-\Omega(N)}$. For our purposes, however, a (much weaker) inverse polynomial upper bound suffices.

\begin{lemma} \label{l:thm}
For a traceless local operator $A$ with $\|A\|=1$, Assumption \ref{ethasmp} implies that
\begin{equation} \label{eq:main}
\Delta A=\frac{|\tr(HA)|^2}{d\tr(H^2)}+O(1/N^2).
\end{equation}
\end{lemma}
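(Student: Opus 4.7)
The plan is to use Lemma~\ref{l:1} to reduce $\Delta A$ to the microcanonical variance $\Delta A_J$, apply Assumption~\ref{ethasmp} to substitute the smooth function $f_A$ in place of the eigenstate expectation values, and then Taylor-expand $f_A$ around the origin; the leading term of that expansion should match the right-hand side of \eqref{eq:main}.

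Concretely, by Lemma~\ref{l:1} it suffices to prove \eqref{eq:main} with $\Delta A$ replaced by $\Delta A_J$. Using $|A_{jj} - f_A(E_j/N)| \le 1/\poly(N)$ for $j \in J$, one substitutes $A_{jj} \mapsto f_A(E_j/N)$ in both sums defining $\Delta A_J$ with a negligible $1/\poly(N)$ error. I would then expand $f_A(x) = \sum_k a_k x^k$ and rewrite $\Delta A_J = \sum_{k,l} a_k \bar a_l\,(\mu_{k+l} - \mu_k \mu_l)$, where $\mu_n := |J|^{-1}\sum_{j \in J}(E_j/N)^n$. Since the density of states of a local Hamiltonian is concentrated in an $O(\sqrt{N})$ window around zero with sub-Gaussian tails, $\mu_n$ agrees with the unrestricted moment $\tr(H^n)/(dN^n)$ up to $e^{-\Omega(N)}$. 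Combined with the elementary bounds $\tr(H)/d = 0$ and $\tr(H^n)/d = O(N^{\lfloor n/2 \rfloor})$ for $n \ge 2$ (a consequence of cumulant bounds for sums of local traceless terms), this gives $\mu_1 = 0$, $\mu_2 = \Theta(1/N)$, and $\mu_n = O(N^{-\lceil n/2 \rceil})$ for $n \ge 3$. A direct enumeration of $(k,l)$ then shows that the pair $(1,1)$ is the only one contributing at order $1/N$; every other pair yields $\mu_{k+l} - \mu_k \mu_l = O(1/N^2)$. Hence $\Delta A_J = |a_1|^2\,\tr(H^2)/(dN^2) + O(1/N^2)$.

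For the right-hand side of \eqref{eq:main}, I would compute $\tr(HA) = \sum_j E_j A_{jj}$ by the same substitution, noting that eigenstates outside $J$ contribute at most $e^{-\Omega(N)}d$ in absolute value (they are exponentially few, while $|E_j A_{jj}| \le \|H\| = O(N)$). The resulting Taylor series $\sum_k a_k \tr(H^{k+1})/N^k$ has leading term $a_1 \tr(H^2)/N = \Theta(a_1 d)$, while every subsequent term is $O(d/N)$ or smaller by the same moment bounds. Squaring the modulus and dividing by $d\tr(H^2) = \Theta(d^2 N)$ then yields $|\tr(HA)|^2/(d\tr(H^2)) = |a_1|^2\,\tr(H^2)/(dN^2) + O(1/N^2)$, matching $\Delta A$ to the required precision.

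The main obstacle is the error bookkeeping: four distinct sources of error---Lemma~\ref{l:1}, the $1/\poly(N)$ ETH error, the truncation of the Taylor expansion at constant order, and the deviation between moments restricted to $J$ and on the full spectrum---must all be shown to combine to the single $O(1/N^2)$ term. Only a constant number of Taylor terms need be retained, but the moment/cumulant estimates for local Hamiltonians and the sub-Gaussian concentration of the density of states must be invoked carefully so that no power of $N$ is lost, particularly in the cross terms produced when $|\tr(HA)|^2$ is expanded.
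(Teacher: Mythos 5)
Your strategy is essentially the paper's: substitute $f_A(E_j/N)$ for $A_{jj}$ via Assumption~\ref{ethasmp}, Taylor-expand around zero energy density, and control everything with the moment bounds of Lemma~\ref{l:m}. One structural difference is worth recording. You route through $\Delta A_J$ via Lemma~\ref{l:1} (legitimate --- that lemma is proved without the ETH, so there is no circularity), and the variance form pays off: every term of $\sum_{k,l}a_k\bar a_l(\mu_{k+l}-\mu_k\mu_l)$ with $k=0$ or $l=0$ vanishes identically, so you never need to know that $f_A(0)$ is small. The paper works with $\Delta A$ directly and must separately prove $|f_A(0)|=O(1/N)$ from $\tr A=0$ (Eq.~(\ref{zero})) in order to kill both the $|f_A(0)|^2$ term and the $f_A(0)\overline{f''_A(0)}E_j^2/N^2$ cross term in its error budget (\ref{eq:me1}); your decomposition sidesteps that lemma. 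You do still need $|\tr(H^3)|/d=O(N)$, i.e.\ Eq.~(\ref{eq:m2}), for the $(k,l)=(1,2)$ term, exactly where the paper needs it, and your odd-moment bound $\tr(H^n)/d=O(N^{\lfloor n/2\rfloor})$ is the right statement there.

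The one step that would fail as written is the substitution and expansion of $f_A$ uniformly over $J$. Assumption~\ref{ethasmp} only provides $f_A$ on $|E_j|\le N\epsilon$ for an arbitrarily small constant $\epsilon$ (so for $\delta_1,\delta_2>\epsilon$ you cannot even write $f_A(E_j/N)$ for all $j\in J$), and it only guarantees a Taylor expansion to finite order at $x=0$; for $j\in J$ with $|E_j/N|=\Theta(1)$ the degree-$K$ remainder is $O(1)$ pointwise, not $O(1/N^2)$, so the identity $\Delta A_J=\sum_{k,l}a_k\bar a_l(\mu_{k+l}-\mu_k\mu_l)$ is not justified. The paper's device is to first discard all $j$ with $|E_j|\ge\Lambda=C\sqrt{N\log N}$ using the concentration bound (\ref{tail}), at cost $1/\poly(N)$, and only then Taylor-expand, making the remainder $O((\Lambda/N)^{K+1})=O(1/N^2)$ already at low order $K$. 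You invoke the sub-Gaussian concentration of the spectrum, but only to equate restricted and unrestricted moments; it must also be applied to the Taylor remainder and to the domain of validity of the ETH before your moment bookkeeping becomes rigorous. With that repair the argument goes through and reproduces the paper's leading term $|a_1|^2\tr(H^2)/(dN^2)=|\tr(HA)|^2/(d\tr(H^2))+O(1/N^2)$.
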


For generic $A$, $\tr(HA)$ is non-zero, and hence the first term on the right-hand side of Eq. (\ref{eq:main}) is $\Theta(1/N)$ (see Eq. (\ref{eq:m1})).

Equation (\ref{eq:main}) was derived in Refs. \cite{HBZ19, MV20}, neither of which has rigorously bounded the approximation error in the derivation. In Appendix \ref{ss:pl2}, we follow Ref. \cite{HBZ19} and present a complete proof of Lemma \ref{l:thm} with rigorous error analysis.

To understand Lemmas \ref{l:1}, \ref{l:thm} quickly without going into technical details, it is instructive to consider the special case where the Hamiltonian (\ref{eq:H}) is translation invariant and where $A=H_i$ is a term in the Hamiltonian. In this case, Lemma \ref{l:thm} is trivially true since $(H_i)_{jj}=E_j/N$. Lemma \ref{l:1} follows directly from the fact that $E_j$'s approach a normal distribution in the thermodynamic limit $N\to+\infty$ \cite{KLW15, BC15}.

Combining Lemmas \ref{l:1} and \ref{l:thm}, we obtain
\begin{theorem} \label{thm:cor}
For a traceless local operator $A$ with $\|A\|=1$, Assumption \ref{ethasmp} implies that
\begin{equation} \label{eq:cor}
\Delta A_J=\frac{|\tr(HA)|^2}{d\tr(H^2)}+O(1/N^2),\quad\Delta'A_J=\frac{|\tr(HA)|^2}{d\tr(H^2)}+O(1/N^2).
\end{equation}
\end{theorem}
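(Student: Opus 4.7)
The plan is essentially to combine the two preceding lemmas by the triangle inequality. Lemma \ref{l:thm} identifies the leading-order value of $\Delta A$ up to an $O(1/N^2)$ error, while Lemma \ref{l:1} asserts that each of $\Delta A_J$ and $\Delta'A_J$ agrees with $\Delta A$ up to an $e^{-\Omega(N)}$ correction. Since exponential decay dominates any inverse polynomial, the two error terms combine into a single $O(1/N^2)$ term, yielding both equalities in (\ref{eq:cor}).

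Concretely, I would proceed as follows. First, write
\begin{equation*}
\Delta A_J = \Delta A + (\Delta A_J - \Delta A), \qquad \Delta'A_J = \Delta A + (\Delta'A_J - \Delta A).
\end{equation*}
Second, substitute the asymptotic expression for $\Delta A$ given by Lemma \ref{l:thm} into both right-hand sides. Third, bound the residuals $|\Delta A_J - \Delta A|$ and $|\Delta'A_J - \Delta A|$ using the two parts of Lemma \ref{l:1}, noting that $e^{-\Omega(N)} = O(1/N^2)$ so these exponentially small corrections are absorbed into the $O(1/N^2)$ error already present.

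I do not anticipate any real obstacle, since Assumption \ref{ethasmp} enters only through Lemma \ref{l:thm}, and Lemma \ref{l:1} holds without invoking it. The only minor point worth stating cleanly in the final write-up is that the constants hidden in the two $O$-notations add in a straightforward way, and that no independence between the error sources is required for the bound to hold. Thus the theorem follows in a couple of lines once both lemmas are in hand.
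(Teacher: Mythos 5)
Your proposal is correct and matches the paper's own argument exactly: the paper states Theorem \ref{thm:cor} as an immediate consequence of combining Lemmas \ref{l:1} and \ref{l:thm}, with the exponentially small discrepancy absorbed into the $O(1/N^2)$ error term. Nothing further is needed.
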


\section{Discussion} \label{s:dis}

The finite-size scaling of the EEV fluctuation $\Delta'A_J$ was studied by Beugeling et al. \cite{BMH14}. For $\delta_3=0.025$ and $J$ being the middle $20\%$ of the spectral range, they presented numerical evidence that $\Delta'A_J$ scales as $d^{-1/2}=e^{-\Theta(N)}$ in generic non-integrable systems, where $d$ is the dimension of the Hilbert space. Furthermore, a heuristic ``typicality argument'' was given to explain the $d^{-1/2}$ behavior. This result is different from ours. Recall that Eq. (\ref{eq:cor}) states that for generic $A$, $\Delta'A_J$ scales as $\Theta(1/N)$.

In fact, the typicality argument has already been proved to be problematic in a different but related setting \cite{HU18}. Using the equivalence of $\Delta A_J$ and $\Delta' A_J$ (Lemma \ref{l:1}), the proof can be extended to $\Delta' A_J$. This requires some work, and we do not present the extension here.

For a constant $\delta_3$\footnote{I have confirmed by personal communication with the authors of Ref. \cite{SHU21} that $\delta_3$ is set as a constant in their numerical study.} and $J$ being the middle $10\%$ of the spectral range, Sugimoto et al. \cite{SHU21} found numerically that
\begin{equation} \label{eq:Japan}
\Delta''A_J=e^{-\Omega(N)},\quad\Delta''A_J:=\max_{j\in J}\left|A_{jj}-\frac{1}{|K_j|}\sum_{k\in K_j}A_{kk}\right|^2
\end{equation}
in generic translation-invariant and generic disordered spin chains. Since $\Delta''A_J\ge\Delta'A_J$, Eq. (\ref{eq:Japan}) is inconsistent with Eq. (\ref{eq:cor}).

We suspect that the discrepancy between Eq. (\ref{eq:cor}) and the numerical results of Refs. \cite{BMH14, SHU21} is due to finite-size effects in the simulations.

\section*{Declaration of competing interest}

The author declares that he has no known competing financial interests or personal relationships that could have appeared to influence the work reported in this paper.

\section*{Acknowledgments}

The author would like to thank Fernando G.S.L. Brand\~ao, Xie Chen, and Yong-Liang Zhang for collaboration on a related project \cite{HBZ19}. This work was supported by NSF grant PHY-1818914.

\appendix 

\section{Proofs} \label{s:app}

\begin{lemma} [moments \cite{HBZ19}] \label{l:m}
For any integer $m\ge0$,
\begin{gather}
\frac{1}{d}\sum_jE_j^{2m}=\frac{1}{d}\tr(H^{2m})=\Theta(N^m), \label{eq:m1}\\
\frac{1}{d}\left|\sum_jE_j^3\right|=\frac{1}{d}|\tr(H^3)|=O(N). \label{eq:m2}
\end{gather}
\end{lemma}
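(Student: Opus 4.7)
The plan is to attack all three estimates by a single cluster expansion of $\tr(H^{k}) = \sum_{i_{1},\ldots,i_{k}} \tr(H_{i_{1}}\cdots H_{i_{k}})$. For each index tuple I would form an overlap graph on the slots $\{1,\ldots,k\}$ whose edges join slots $a,b$ with $R_{i_{a}}\cap R_{i_{b}}\ne\emptyset$, where $R_{i}$ denotes the support of $H_{i}$. Operators on disjoint supports commute, so the factors in the product can be regrouped along the connected components $C_{1},\ldots,C_{r}$ of that graph, and the resulting ``cluster operators'' $O_{s}$ (the ordered subproducts on supports $R_{s}=\bigcup_{a\in C_{s}} R_{i_{a}}$) act on mutually disjoint regions. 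This yields the factorization
\[
\tr(H_{i_{1}}\cdots H_{i_{k}})/d \;=\; \prod_{s=1}^{r} \tr_{R_{s}}(O_{s})/d_{R_{s}}.
\]
The key observation is that whenever some $C_{s}=\{a\}$ is a singleton, that factor collapses to $\tr(H_{i_{a}})/d_{R_{i_{a}}}=0$ by tracelessness, killing the whole term.

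For the upper bound $\tr(H^{2m})/d = O(N^{m})$, only tuples whose overlap graph has every component of size at least $2$ survive, and such a tuple has at most $m$ components. I would sum over the $O(1)$ set partitions of $\{1,\ldots,2m\}$ into blocks of size $\geq 2$, and within each block of size $s$ count the consistent index assignments by anchoring one slot to any of the $\Theta(N)$ sites and using locality to bound the remaining $s-1$ slots (each must have its operator supported within constant distance of the growing cluster, giving $O(1)$ possibilities). The per-block count is $O(N)$, so the per-partition count is $O(N^{r})\leq O(N^{m})$; multiplying by the $O(1)$ magnitude of each trace factor gives $O(N^{m})$ in total. The same counting at $k=3$ is even simpler: the only admissible partition of $\{1,2,3\}$ forces all three slots into a single component, yielding $|\tr(H^{3})|/d=O(N)$.

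For the matching lower bound $\tr(H^{2m})/d=\Omega(N^{m})$, I would reduce to the case $m=1$ via Jensen's inequality applied to the convex map $x\mapsto x^{m}$ on $x\geq 0$, with the uniform measure on $\{E_{j}^{2}\}$:
\[
\frac{1}{d}\sum_{j} E_{j}^{2m} \;\geq\; \Bigl(\frac{1}{d}\sum_{j} E_{j}^{2}\Bigr)^{\!m} \;=\; \bigl(\tr(H^{2})/d\bigr)^{m}.
\]
Expanding $\tr(H^{2})/d=\sum_{i,j}\tr(H_{i}H_{j})/d$, disjoint pairs vanish as before; the diagonal sum is $\Omega(N)$ since each $\tr(H_{i}^{2})/d$ is $\Omega(1)$ (as $H_{i}$ is traceless with $\|H_{i}\|=\Theta(1)$), and the $O(N)$ overlapping off-diagonal pairs each contribute $O(1)$.

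The main obstacle is precisely this last step: in full generality the off-diagonal cross terms could conspire to cancel the diagonal contribution and drive $\tr(H^{2})/d$ below $\Theta(N)$, so I would need to invoke an implicit non-degeneracy hypothesis on the local decomposition $H=\sum_{i} H_{i}$, standard in this literature though not spelled out in the excerpt. The upper bound arguments, by contrast, I expect to go through cleanly; the only minor bookkeeping subtlety is handling tuples in which the same site index appears in several slots, which is automatic once one treats the slots, rather than the distinct site indices, as the vertices of the overlap graph throughout.
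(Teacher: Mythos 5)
Your proposal is correct and reaches the same bounds as the paper, but by a genuinely different route. The paper's (one-line) proof expands $H$ in the generalized Pauli basis and counts tuples of Pauli strings whose product is proportional to the identity, asserting there are $\Theta(N^m)$ such terms each contributing a trace of magnitude $\Theta(d)$; you instead work directly with the local terms $H_i$, form the overlap graph on slots, use tracelessness of each $H_i$ to kill any tuple with an isolated factor, and count connected clusters. For the upper bounds the two countings are essentially interchangeable and both give $O(N^m)$ and $O(N)$ cleanly. Where your route is arguably tighter is the lower bound: the paper's sketch implicitly assumes the $\Theta(N^m)$ surviving terms do not cancel among themselves (their signs can differ for $m\ge2$), whereas your Jensen reduction $\frac{1}{d}\sum_j E_j^{2m}\ge(\frac{1}{d}\tr(H^2))^m$ isolates the entire issue in the single case $m=1$. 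Your flagged caveat there is legitimate and is shared by the paper: with only $\|H_i\|=\Theta(1)$ and $\tr H_i=0$ per term, the lower bound can genuinely fail (e.g., a telescoping decomposition of $H=0$ into nonzero traceless local terms), so some non-degeneracy of $H$ itself is implicitly assumed. Note that the Pauli basis does buy something precisely at $m=1$: orthogonality of Pauli strings gives $\tr(H^2)/d=\sum_{P\ne I}|c_P|^2$ with no cross-term cancellation, so the needed hypothesis reduces to $H$ having extensive Pauli weight; combining that observation with your Jensen step would yield the cleanest complete argument.
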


\begin{proof}
Expanding $H$ in the generalized Pauli basis, we count the number of terms that do not vanish upon taking the trace in the expansion of $H^{2m}$. There are $\Theta(N^m)$ such terms, the trace of each of which is $\Theta(d)$. Therefore, we obtain Eq. (\ref{eq:m1}). Equation (\ref{eq:m2}) can be proved in the same way.
\end{proof}

Almost all eigenstates have vanishing energy density:
\begin{lemma} [concentration of eigenvalues \cite{Ans16}] \label{Mar}
For any $\epsilon>0$,
\begin{equation}
|\{j:|E_j|\ge N\epsilon\}|/d=e^{-\Omega(N\epsilon^2)}.
\end{equation}
\end{lemma}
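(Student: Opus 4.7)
The plan is to prove the bound via the moment method, upgrading Lemma \ref{l:m} to an estimate uniform in the moment order $m$. Viewing the normalized spectral counting measure as a probability distribution with mass $1/d$ on each eigenvalue, Markov's inequality applied to $E_j^{2m}$ gives, for every positive integer $m$,
\begin{equation}
\frac{|\{j:|E_j|\ge N\epsilon\}|}{d}\le\frac{\tr(H^{2m})}{d(N\epsilon)^{2m}}.
\end{equation}
Gaussian concentration would then follow from an explicit uniform moment bound of the form $\tr(H^{2m})/d\le(c_1 mN)^m$ with $c_1$ independent of $m$ and $N$: the right-hand side above becomes $(c_1 m/(N\epsilon^2))^m$, and choosing $m=\lfloor N\epsilon^2/(c_1 e^2)\rfloor$ then yields the advertised $e^{-\Omega(N\epsilon^2)}$.

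The main work is therefore to establish this uniform moment bound. Following the proof of Lemma \ref{l:m}, I would expand $\tr(H^{2m})=\sum_{i_1,\ldots,i_{2m}}\tr(H_{i_1}H_{i_2}\cdots H_{i_{2m}})$ and exploit tracelessness plus locality of the $H_i$: a tuple contributes nontrivially only when the multiset of sites (together with their $O(1)$-neighborhoods) is ``matched,'' i.e., every site appears at least twice, for otherwise some factor $H_{i_k}$ splits off and its vanishing trace kills the contribution. A Wick-style pairing count then shows that the number of admissible tuples is at most $(2m-1)!!\,(c_2 N)^m$, while each nonzero summand is bounded in absolute value by $\|H_{i_1}\cdots H_{i_{2m}}\|\cdot d\le C^{2m}d$. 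Using $(2m-1)!!\le(2m)^m$ and absorbing constants yields the required $(c_1 mN)^m$ estimate.

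The main obstacle is precisely this combinatorial bookkeeping: Lemma \ref{l:m} as stated only asserts $\tr(H^{2m})/d=\Theta(N^m)$ for each fixed $m$, with hidden constants that may blow up super-exponentially in $m$ and would prevent any concentration once $m$ is allowed to grow with $N$. An alternative that sidesteps redoing the count from scratch is to invoke the concentration inequality of Anshu \cite{Ans16} as a black box; that paper proves precisely the desired Gaussian tail through a cluster-expansion-controlled moment method, which is the reason for the attribution.
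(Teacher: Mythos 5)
The paper does not actually prove this lemma: it is imported wholesale from \cite{Ans16}, so there is no in-house argument to compare against. Your fallback option --- invoke Anshu's concentration bound for product states as a black box, applied to the maximally mixed state $(\mathbb{1}/d_{\mathrm{loc}})^{\otimes N}$, under which the spectral counting measure $|\{j:|E_j|\ge N\epsilon\}|/d$ is exactly the probability of the tail event --- is precisely what the paper does. Your primary plan (Markov on $E_j^{2m}$ plus a uniform-in-$m$ moment bound, then optimize $m\sim N\epsilon^2$) is also the standard route by which such results are proved, and the reduction steps you do carry out are correct: Markov's inequality, the choice $m=\lfloor N\epsilon^2/(c_1e^2)\rfloor$, and the observation that Lemma 4 as stated is useless here because its implied constants may grow arbitrarily with $m$. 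That last diagnosis is exactly right and is the reason the paper cites \cite{Ans16} rather than bootstrapping from its own moment lemma.

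The gap is that the uniform bound $\tr(H^{2m})/d\le(c_1mN)^m$ is asserted, not proved, and your sketch of it contains two imprecisions that would need repair. First, the correct vanishing criterion is not ``every site appears at least twice'' but ``every factor's support overlaps the support of some other factor'': $\tr(H_{i_1}\cdots H_{i_{2m}})$ can be nonzero with all indices distinct, as long as the supports chain together, since only a factor whose support is disjoint from all others splits off and kills the trace via $\tr H_{i_k}=0$. (Your parenthetical about $O(1)$-neighborhoods suggests you know this, but the count must be organized around connected clusters of the overlap graph, not around repeated sites.) Second, $(2m-1)!!$ counts only pair partitions; admissible tuples include partitions into clusters of size $\ge3$ (e.g., all $2m$ supports stacked near one site), and when $m$ is allowed to grow like $N\epsilon^2$ it is not automatic that the pair-partition term dominates. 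It does --- a generating-function or cluster-expansion estimate shows the full sum over partitions into blocks of size $\ge2$, weighted by $N$ per block and $O(1)^{|B|}$ placement factors per block, is still $(O(mN))^m$ for $m=O(N)$ --- but this is the entire content of the lemma and is where all the work in \cite{Ans16} lives. As written, your proposal correctly identifies the obstacle and points at the right reference, but does not surmount the obstacle itself.
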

This lemma allows us to upper bound the total contribution of all eigenstates away from the middle of the spectrum. Let $C=O(1)$ be a sufficiently large constant such that
\begin{equation} \label{tail}
\frac{1}{d}\sum_{j:|E_j|\ge\Lambda}1\le\frac{1}{d}\sum_{j:|E_j|\ge\Lambda}|E_j|\le q,\quad\Lambda:=C\sqrt{N\log N},\quad q:=1/\poly(N),
\end{equation}
where $\poly(N)$ denotes a polynomial of sufficiently high degree in $N$.

Lemma \ref{Mar}, Eq. (\ref{eq:m1}), and (\ref{tail}) are related to the fact that $E_j$'s approach a normal distribution in the thermodynamic limit $N\to+\infty$ \cite{KLW15, BC15}. Indeed, $|E_j|=\Theta(\sqrt N)$ for almost all $j$.

\subsection{Proof of Lemma \ref{l:1}} \label{ss:pl1}

\begin{proof} [Proof of Eq. (\ref{eq:def1})]
By definition,
\begin{multline}
\Delta A_J-\Delta A=\frac{1}{|J|}\sum_{j\in J}|A_{jj}|^2-\frac{1}{|J|^2}\left|\sum_{j\in J}A_{jj}\right|^2-\frac{1}{d}\sum_{j}|A_{jj}|^2\\
=\left(\frac{1}{|J|}-\frac{1}{d}\right)\sum_{j\in J}|A_{jj}|^2-\frac{1}{|J|^2}\left|\sum_{j\not\in J}A_{jj}\right|^2-\frac{1}{d}\sum_{j\not\in J}|A_{jj}|^2.
\end{multline}
Since $\delta_1,\delta_2=\Theta(1)$ are positive constants, Lemma \ref{Mar} implies that $1-|J|/d=e^{-\Omega(N)}$. Therefore,
\begin{equation}
|\Delta A_J-\Delta A|\le(1/|J|-1/d)|J|+(d-|J|)^2/|J|^2+(d-|J|)/d=e^{-\Omega(N)}.
\end{equation}
\end{proof}

\begin{proof} [Proof of Eq. (\ref{eq:def2})]
Let $\delta':=\min\{\delta_1,\delta_2,\delta_3/2\}=\Theta(1)$ and $J':=\{j:|E_j|\le N\delta'\}\subseteq J$. By definition,
\begin{align}
\Delta'A_J-\Delta A&=\frac{1}{|J|}\sum_{j\in J}\left|A_{jj}-\frac{1}{|K_j|}\sum_{k\in K_j}A_{kk}\right|^2-\frac{1}{d}\sum_{j}|A_{jj}|^2
\nonumber\\
&=\frac{1}{|J|}\sum_{j\in J'}\left(\frac{A_{jj}^*}{|K_j|}\sum_{k\not\in K_j}A_{kk}+\frac{A_{jj}}{|K_j|}\sum_{k\not\in K_j}A_{kk}^*+\frac{1}{|K_j|^2}\left|\sum_{k\not\in K_j}A_{kk}\right|^2\right)\nonumber\\
&+\left(\frac{1}{|J|}-\frac{1}{d}\right)\sum_{j\in J'}|A_{jj}|^2+\frac{1}{|J|}\sum_{j\in J\setminus J'}\left|A_{jj}-\frac{1}{|K_j|}\sum_{k\in K_j}A_{kk}\right|^2-\frac{1}{d}\sum_{j\not\in J'}|A_{jj}|^2.
\end{align}
For any $j\in J'$,
\begin{equation}
K_j=\{k:|E_j-E_k|\le N\delta_3\}\supseteq\{k:|E_k|\le N\delta_3/2\}.
\end{equation}
For such $j$, Lemma \ref{Mar} implies that $1-|K_j|/d=e^{-\Omega(N)}$. Therefore,
\begin{equation}
|\Delta'A_J-\Delta A|
\le\frac{3}{|J|}\sum_{j\in J'}\frac{d-|K_j|}{|K_j|}+\left(\frac{1}{|J|}-\frac{1}{d}\right)|J'|+\frac{4(|J|-|J'|)}{|J|}+\frac{d-|J'|}{d}\\
=e^{-\Omega(N)}.
\end{equation}
\end{proof}

\subsection{Proof of Lemma \ref{l:thm}} \label{ss:pl2}

\begin{lemma} [\cite{Hua20}] \label{l:2}
For a traceless local operator $A$ with $\|A\|=1$, Assumption \ref{ethasmp} implies that
\begin{gather}
|f_A(0)|=O(1/N),\label{zero}\\
|f'_A(0)-N\tr(HA)/\tr(H^2)|=O(1/N).\label{deriv}
\end{gather}
\end{lemma}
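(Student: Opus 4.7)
The plan is to exploit the two identities
\begin{equation*}
0=\tr A=\sum_j A_{jj},\qquad \tr(HA)=\sum_j E_j A_{jj},
\end{equation*}
substitute the ETH ansatz $A_{jj}=f_A(E_j/N)+\eta_j$ (with $|\eta_j|=O(1/\poly(N))$ on the bulk $B:=\{j:|E_j|\le N\epsilon\}$ by Assumption \ref{ethasmp}), Taylor expand $f_A$ around the origin, and match the result against the spectral moments supplied by Lemma \ref{l:m}. The first identity will pin down $f_A(0)$, yielding (\ref{zero}); the second, combined with (\ref{zero}), will pin down $f'_A(0)$, yielding (\ref{deriv}).

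For (\ref{zero}), I split $\{1,\ldots,d\}$ into $B$ and its complement. The complement has size $de^{-\Omega(N)}$ by Lemma \ref{Mar}, and its contribution is controlled using $|A_{jj}|\le\|A\|=1$ together with Eq.\ (\ref{tail}). On $B$, substituting the Taylor expansion of $f_A$ gives
\begin{equation*}
0=f_A(0)\,|B|+\tfrac{f'_A(0)}{N}\sum_{j\in B}E_j+\tfrac{f''_A(0)}{2N^2}\sum_{j\in B}E_j^2+(\text{higher})+(\text{errors}).
\end{equation*}
Since $\sum_j E_j=\tr H=0$ and the tail sum of $|E_j|$ is $O(d/\poly(N))$ by (\ref{tail}), the bulk linear moment is negligible; the bulk quadratic moment equals $\tr(H^2)$ up to a negligible correction, which is $\Theta(dN)$ by (\ref{eq:m1}); hence $f_A(0)\,d+O(d/N)=0$, which is (\ref{zero}). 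The same decomposition applied to $\tr(HA)=\sum_j E_j A_{jj}$ produces
\begin{equation*}
\tr(HA)=f_A(0)\sum_{j\in B}E_j+\tfrac{f'_A(0)}{N}\sum_{j\in B}E_j^2+\tfrac{f''_A(0)}{2N^2}\sum_{j\in B}E_j^3+(\text{higher})+(\text{errors}).
\end{equation*}
The $f_A(0)$ term is $O(d/N^2)$ by (\ref{zero}); the cubic term is $O(d/N)$ since $|\tr(H^3)|=O(dN)$ by (\ref{eq:m2}) and the tail contribution to $\sum E_j^3$ is negligible; higher-order Taylor terms are controlled using (\ref{eq:m1}); and the ETH error $\sum_{j\in B}E_j\eta_j$ is bounded by Cauchy--Schwarz against $\tr(H^2)$. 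The leading surviving term is $f'_A(0)\tr(H^2)/N$; dividing through by $\tr(H^2)/N=\Theta(d)$ yields (\ref{deriv}).

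The principal obstacle is careful book-keeping of error terms: each of the tail contribution, the $1/\poly(N)$ ETH correction weighted by powers of $|E_j|$ up to $\|H\|=O(N)$, and the Taylor remainder must be shown to be $O(d/N)$ after summation over $B$. The key enabler is that the spectral moments grow only polynomially ($\tr(H^{2m})/d=\Theta(N^m)$ by Lemma \ref{l:m}), so choosing the polynomial in Assumption \ref{ethasmp} of sufficiently high degree dominates every such error uniformly.
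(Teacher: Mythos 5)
Your proposal is correct and follows essentially the same route as the paper: expand $\tr A=0$ and $\tr(HA)$ over a bulk/tail decomposition of the spectrum, substitute the ETH ansatz, Taylor expand $f_A$ about $0$, and match against the moment bounds of Lemma \ref{l:m} (with $\tr H=0$ killing the linear moment and Eq.\ (\ref{eq:m2}) controlling the cubic one). The only cosmetic differences are your choice of bulk cutoff $N\epsilon$ versus the paper's $\Lambda=C\sqrt{N\log N}$ and your use of Cauchy--Schwarz for the ETH error term, neither of which changes the argument.
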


We include the proof of this lemma for completeness. For notational simplicity, let $x\overset{\delta}=y$ denote $|x-y|\le\delta$.

\begin{proof} [Proof of Eq. (\ref{zero})]
\begin{multline} \label{zeroc}
0=\frac{1}{d}\tr A=\frac{1}{d}\sum_jA_{jj}\overset{O(q)}=\frac{1}{d}\sum_{j:|E_j|<\Lambda}A_{jj}\overset{1/\poly(N)}=\frac{1}{d}\sum_{j:|E_j|<\Lambda}f_A(E_j/N)\\
\approx\frac{1}{d}\sum_{j:|E_j|<\Lambda}\left(f_A(0)+\frac{f'_A(0)E_j}{N}\right)\overset{O(q)}=\frac{1}{d}\sum_j\left(f_A(0)+\frac{f'_A(0)E_j}{N}\right)=f_A(0),
\end{multline}
where we used (\ref{tail}), the ETH (\ref{asmpeq}), and the Taylor expansion
\begin{equation} \label{taylor}
f_A(E_j/N)=f_A(0)+f'_A(0)E_j/N+f''_A(0)E_j^2/(2N^2)+O(|E_j|^3/N^3)
\end{equation}
in the steps marked with ``$O(q)$,'' ``$1/\poly(N)$,'' and ``$\approx$,'' respectively. The approximation error in the ``$\approx$'' step is
\begin{equation} \label{zeroe}
\frac{1}{d}\sum_{j:|E_j|<\Lambda}O(E_j^2/N^2)\le\frac{1}{d}\sum_jO(E_j^2/N^2)=O(1/N),
\end{equation}
where we used Eq. (\ref{eq:m1}) with $m=1$. We obtain Eq. (\ref{zero}) by combining (\ref{zeroc}), (\ref{zeroe}).
\end{proof}

\begin{proof} [Proof of Eq. (\ref{deriv})]
\begin{align} \label{derivc}
&\frac{1}{d}\tr(HA)=\frac{1}{d}\sum_jE_jA_{jj}\overset{O(q)}=\frac{1}{d}\sum_{j:|E_j|<\Lambda}E_jA_{jj}\overset{1/\poly(N)}=\frac{1}{d}\sum_{j:|E_j|<\Lambda}E_jf_A(E_j/N)\nonumber\\
&\approx\frac{1}{d}\sum_{j:|E_j|<\Lambda}\left(E_jf_A(0)+\frac{f'_A(0)E_j^2}{N}+\frac{f''_A(0)E_j^3}{2N^2}\right)\nonumber\\
&\overset{O(q)}=\frac{1}{d}\sum_j\left(E_jf_A(0)+\frac{f'_A(0)E_j^2}{N}+\frac{f''_A(0)E_j^3}{2N^2}\right)\overset{O(1/N)}=\frac{f_A'(0)\tr(H^2)}{Nd},
\end{align}
where we used (\ref{tail}), (\ref{asmpeq}), the Taylor expansion (\ref{taylor}), and Eq. (\ref{eq:m2}) in the steps marked with ``$O(q)$,'' ``$1/\poly(N)$,'' ``$\approx$,'' and ``$O(1/N)$,'' respectively. The approximation error in the ``$\approx$'' step is
\begin{equation} \label{derive}
\frac{1}{d}\sum_{j:|E_j|<\Lambda}O(E_j^4/N^3)\le\frac{1}{d}\sum_jO(E_j^4/N^3)=O(1/N),
\end{equation}
where we used Eq. (\ref{eq:m1}) with $m=2$. We obtain Eq. (\ref{deriv}) by combining (\ref{derivc}), (\ref{derive}).
\end{proof}

We are ready to prove Lemma \ref{l:thm}:
\begin{align} \label{eq:mc}
&\Delta_A=\frac{1}{d}\sum_j|A_{jj}|^2\overset{O(q)}=\frac{1}{d}\sum_{j:|E_j|<\Lambda}|A_{jj}|^2\overset{1/\poly(N)}=\frac{1}{d}\sum_{j:|E_j|<\Lambda}|f_A(E_j/N)|^2\nonumber\\
&\approx\frac{1}{d}\sum_{j:|E_j|<\Lambda}\left|f_A(0)+\frac{f'_A(0)E_j}{N}\right|^2\overset{O(q)}=\frac{1}{d}\sum_j\left|f_A(0)+\frac{f'_A(0)E_j}{N}\right|^2\nonumber\\
&=|f_A(0)|^2+\frac{|f'_A(0)|^2}{d}\sum_j\frac{E_j^2}{N^2}=|f_A(0)|^2+\frac{|f'_A(0)|^2\tr(H^2)}{N^2d}\overset{O(1/N^2)}=\frac{|\tr(HA)|^2}{d\tr(H^2)},
\end{align}
where we used (\ref{tail}), (\ref{asmpeq}), and Lemma \ref{l:2} in the steps marked with ``$O(q)$,'' ``$1/\poly(N)$,'' and ``$O(1/N^2)$,'' respectively. In the ``$\approx$'' step, we used (\ref{taylor}) with the approximation error upper bounded by
\begin{multline} \label{eq:me1}
O(1/d)\left|\sum_{j:|E_j|<\Lambda}\frac{E_j^3}{N^3}\right|+O(1/d)\sum_{j:|E_j|<\Lambda}\left(\frac{|f_A(0)|E_j^2}{N^2}+\frac{E_j^4}{N^4}\right)\\
\le O(1/d)\left|\sum_j\frac{E_j^3}{N^3}\right|+O(q)+O(1/d)\sum_j\left(\frac{E_j^2}{N^3}+\frac{E_j^4}{N^4}\right)=O(1/N^2),
\end{multline}
where we used Lemma \ref{l:m}. We complete the proof of Lemma \ref{l:thm} by combining (\ref{eq:mc}), (\ref{eq:me1}).

\printbibliography

\end{document}